\newtheorem{theorem}{Theorem}
\newtheorem{lemma}{Lemma}
\newcommand*\bigcdot{\mathpalette\bigcdot@{.5}}
\newcommand*\bigcdot@[2]{\mathbin{\vcenter{\hbox{\scalebox{#2}{$\m@th#1\bullet$}}}}}
\def\@email#1#2{%
 \endgroup
 \patchcmd{\titleblock@produce}
  {\frontmatter@RRAPformat}
  {\frontmatter@RRAPformat{\produce@RRAP{*#1\href{mailto:#2}{#2}}}\frontmatter@RRAPformat}
  {}{}
}%
\begin{document}

\preprint{AIP/123-QED}

% Main title of the paper
\title{The impact of nodes of information dissemination on epidemic spreading in dynamic multiplex networks}    
\author{Minyu Feng}

\author{Xiangxi Li}

%\affiliation{College of Artificial Intelligence, Southwest University, Chongqing 400715, P. R. China.}
\author{Yuhan Li}
\affiliation{ 
College of Artificial Intelligence, Southwest University, Chongqing 400715, P. R. China.}
%\affiliation{College of Artificial Intelligence, Southwest University, Chongqing 400715, P. R. China.}
\author{Qin Li}
\email{qinli1022@gmail.com}
\affiliation{School of Public Policy and Administration, Chongqing University, Chongqing 400044, P. R. China.}

\date{\today}

% Here goes the abstract
\begin{abstract}
Epidemic spreading processes on dynamic multiplex networks provide a more accurate description of natural spreading processes than those on single layered networks. To describe the influence of different individuals in the awareness layer on epidemic spreading, we propose a two-layer network-based epidemic spreading model including some individuals who neglect the epidemic, and we explore how individuals with different properties in the awareness layer will affect the spread of epidemics. The two-layer network model is divided into an information transmission layer and a disease spreading layer. Each node in the layer represents an individual with different connections in different layers. Individuals with awareness will be infected with a lower probability compared to unaware individuals, which corresponds to the various epidemic prevention measures in real life. We adopt the micro Markov chain approach (MMCA) to analytically derive the threshold for the proposed epidemic model, which demonstrates that the awareness layer affects the threshold of disease spreading. We then explore how individuals with different properties would affect the disease spreading process through extensive Monte Carlo numerical simulations. We find that individuals with high centrality in the awareness layer would significantly inhibit the transmission of infectious diseases. Additionally, we propose conjectures and explanations for the approximately linear effect of individuals with low centrality in the awareness layer on the number of infected individuals.
\end{abstract}

\maketitle

\begin{quotation}
Human societies have suffered from various epidemics all the time. To effectively predict and control the spread of epidemics, lots of researchers use empirical analysis or mathematical modeling to study the spread mechanisms of diseases. The study of epidemic spreading on networks can provide a new perspective to control the spread of epidemics in society. Our task is to discuss how nodes with various topology properties in the awareness layer will affect the spread of diseases. Therefore, we establish a two-layer network propagation model with nodes that do not react with other nodes in the awareness layer. In addition, we analyze the propagation thresholds under this model and provide some explanations. We find significant differences in the impact of nodes with different centrality on disease spreading, nodes with a low centrality in the awareness layer have less effect on the spread of disease. Our work sheds some new lights on the propagation on awareness-disease multiplex networks.
\end{quotation}

\section{Introduction}\label{sec:introduction}

The spread of epidemics is constantly endangering human health. Research on theoretical modeling of epidemic spreading began in 1760, when Swiss mathematician Bernoulli studied the effectiveness of smallpox inoculation through mathematical methods. Kermack and Mckendrick established the threshold theory of epidemic spreading for the first time\cite{1}. In recent decades, with the rapid development of data science, the study of epidemic spreading has ushered in a golden period. Unlike the traditional research methods of epidemic spreading, the current stage of research work mainly considers the structure of interactions between individuals\cite{2}, contact patterns\cite{3}, and migration movement patterns of groups\cite{4}. It has been found that many natural social systems do not satisfy the assumption of homogeneous mixing, both at the individual and population levels, and exhibit the properties of a complex network. Nowadays, modeling research based on complex networks has become a hot spot in epidemic spreading research. Pastor et al. helped us understand computer virus epidemics and other spreading phenomena on communication and social networks \cite{5}. Saumell et al. developed a heterogeneous mean-field approach that allows us to calculate the conditions for the emergence of an endemic state \cite{6}. Stegehui et al. found that community structure is an important determinant of the behavior of percolation processes on networks\cite{7}. The theoretical framework of complex networks cannot only grasp some essential characteristics of natural complex systems but also perform rigorous mathematical calculations. By using the ideas and methods of statistical physics and other disciplines, researchers have developed many different theoretical approaches\cite{8} to study the transmission on complex networks, and many results have been achieved.

In recent years, there has been an increasing number of articles utilizing physical methods to study various social phenomena\cite{9}, accompanied by a research boom in the exploration of the structural aspects of multiplex networks. Li et al. proposed a mathematical framework for the coevolution of epidemic and infodemic on higher-order networks described by simplicial complex, and introduce the Microscopic Markov Chain Approach\cite{10}. This method will also be used in this article. Nie et al. improved Microscopic Markov Chain Approach to solve simplicial competing spreading involving pairwise and high-order interactions\cite{11}. Spreading processes in multiplex networks is still a nascent research area that presents numerous challenging research issues \cite{12}. Li et al. proposed two models from individual and population perspectives and applied stochastic methods to analyze both models \cite{13}. In \cite{14}, Li et al. proposed a limit for degree growth in a new network model. The study of epidemic spreading on multiplex coupled networks has received increasing attention \cite{15, 16}. Zhan et al. proposed an information-driven adaptive model in epidemic spreading, they showed that information and adaptive process can inhibit epidemic spreading\cite{17}. Hong et al. similarly using the micro Markov chain showed that time-varying behavioral responses can effectively suppress the epidemic spread with an increased epidemic threshold\cite{18}. The control of propagation on multiplex networks is also a popular research area. Jiang et al. found that the threshold is dependent upon both the connection strength between the layers and their internal structure \cite{19}. Sun et al. proposed a multilayer network model to study the impact of resource diffusion on disease propagation in such higher-order networks. They showed that the final fractions of infected individuals obtained via the micro Markov chain(MMC) method and the classical Monte Carlo method are very similar \cite{20}. Guo et al. found that when two nodes correspond, the one who knows about prevention takes effective measures to avoid infection. Comparison of MMC and Monte Carlo simulation results showed high consistency, indicating that MMC can predict epidemic outbreaks\cite{21}. Epidemic spreading models based on multiplex coupled networks can more accurately portray the process of disease spreading in the real world. When an epidemic breaks out, information about people's awareness and discussions about the disease will also spread on the Internet. The spread of an epidemic will intensify the spread of information, and the spread of information will remind people to take preventive measures, such as reducing contact with others or vaccination, which can inhibit the further spread of the disease \cite{22, 23}. In such networks, Fan et al. introduced 2-simplex interactions in the information layer, and they proved that this approach could be used to inhibit epidemic outbreaks \cite{24}. Li et al. proposed a modified signed-susceptible-infectious-susceptible epidemiological model that incorporates positive and negative transmission rates based on structural balance theory \cite{25}.

However, existing studies have focused on the impact of nodes on propagation based on their additional attributes, and it is assumed that after receiving epidemic information, nodes will reduce their probability of self-infection in some way. But recent studies and real-life scenarios have shown that some members of society do not take defensive measures against diseases \cite{26}. In real life, these individuals may not believe in the existence of the epidemic, be negligent in prevention after knowing the existence of the epidemic, or unable to change their behavior patterns. Clearly, such individuals cannot be included in the A-state. Therefore, this paper explores the impact of these individuals on epidemic propagation and investigates the effects of nodes with different characteristics as such individuals through a rich set of experiments, ultimately arriving at conclusions. To address this issue, we establish a two-layer network model with nodes that do not take any defensive measures and explore the impact of such nodes on disease spreading by comparing the selection patterns of different nodes, which we call $\Omega$ nodes. Using our model, we analyze the threshold by utilizing the micro Markov chain approach (MMCA). We also conduct extensive numerical simulations, the results of which show that different properties of individuals in the awareness layer can have varying effects on disease propagation. We discuss each specific property in detail and find that the state of individuals with high centrality in the awareness layer is crucial for disease spreading, while individuals with low centrality have less impact on disease propagation.

The paper is organized as follows: In Sec. \uppercase\expandafter{\romannumeral2}, we introduce the process of model building. In Sec. \uppercase\expandafter{\romannumeral3}, we derive the theoretical threshold of the model by MMCA. In Sec. \uppercase\expandafter{\romannumeral4}, we perform numerical simulations. Finally, discussions, conclusions, and outlooks are given in Sec. \uppercase\expandafter{\romannumeral5}.

\section{Model Description}

Most of the standard two-layer network propagation models overlook the problem, and they assume that every node aware of the disease will necessarily take appropriate preventive measures against the disease, which is not very reasonable in real life since there are people who do not take any preventive measures even though they are informed about the infectious disease. For example, when COVID-19 spread worldwide, some people refuse to wear masks\cite{26}, and some people think the epidemic is fake news\cite{27}. Or they relax their vigilance after the disease has spread for a period. In short, there are always some people who are aware of the existence of infectious diseases but still do not change their behavior patterns, and when they meet the conditions of awareness spreading, we cannot classify them as Awareness state (A-state) nodes, since although A-state nodes are called awareness nodes, their essence (impact on the transmission process) is to reduce the risk of being infected by the virus.

\begin{figure}[ht]
  \centering
  \includegraphics[scale = 0.38]{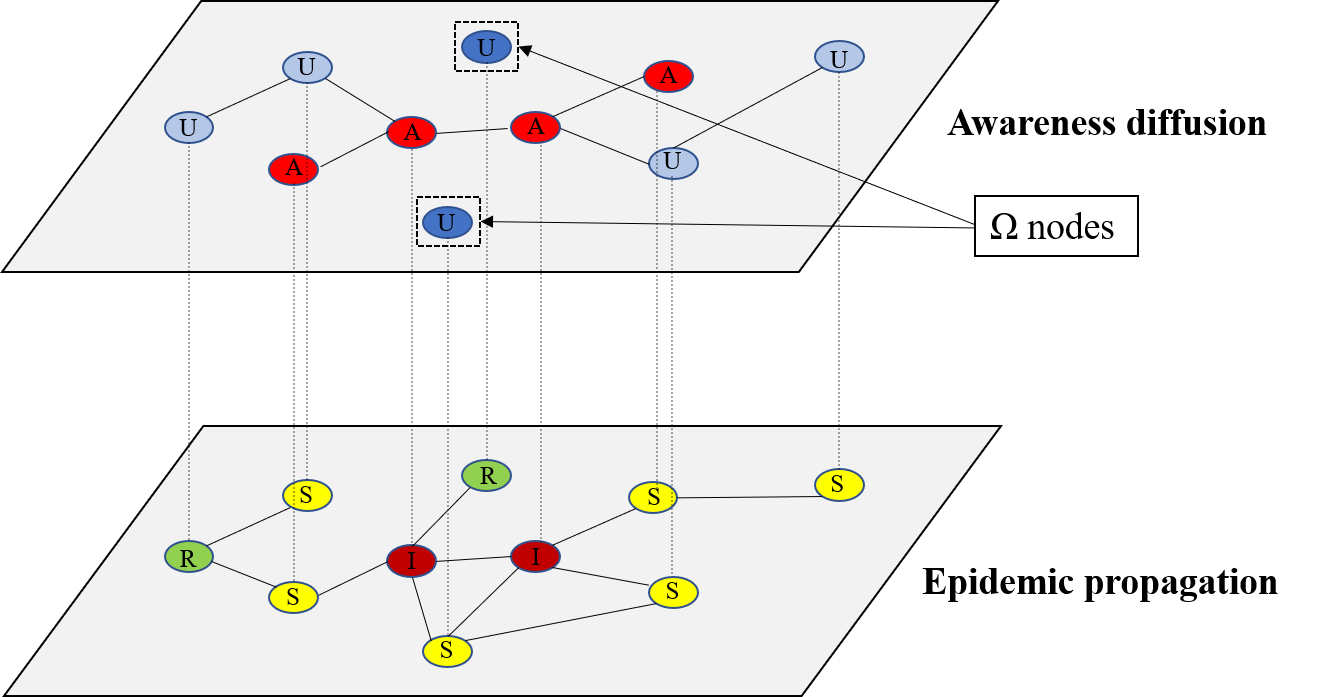}
  
  \caption{\textbf{The structure of a two-layer network.} The states of the nodes have been marked inside the nodes, the dark blue nodes represent $\Omega$ nodes, they do not interact with the nodes in the awareness layer and are always in the U-state. The figure demonstrates that two layer networks have different topological structures, but nodes are one-to-one corresponding.}
\end{figure}

In contrast to previous models that assumed every individual who received information would take measures to reduce their infection rate, our model takes into account the existence of individuals who are aware of the disease but fail to take appropriate preventive measures. These individuals, which we refer to as $\Omega$ nodes in our model, are a novel addition inspired by the aforementioned facts. Then, we analyze the transmission threshold of this model by the MMCA, and select nodes with a large degree, small degree, large betweenness centrality, and small betweenness centrality respectively as $\Omega$ nodes by the numerical simulation, and compare the simulation results with randomly selected $\Omega$ nodes. And we focus on the effect of different $\Omega$ nodes on the propagation process.

\begin{table}[ht]
    \centering
    \caption{Description of symbols}
    \begin{tabular}{ll}
    \toprule
    Symbols & Description\\
    \midrule
    $A$& The adjacency matrix of the \\
    $ $& information diffusion network\\
    $B$& The adjacency matrix of the \\
    $ $& disease propagation network\\
    $a_{ij}$& Elements of the matrix $A$\\
    $b_{ij}$& Elements of the matrix $B$\\
    $r_i(t)$&The probability of node $i$ is not informed at time $t$\\
    $q_i^A(t)$& The probability that A-state node $i$ is \\
    $ $ & not infected at time step $t$\\
    $q_i^U(t)$& The probability that U-state node $i$ is \\
    $ $& not infected at time step $t$\\
    $r$& Inability to disseminate information rate\\
    $\lambda$ & Information dissemination rate\\
    $\delta$& Information recovery rate\\
    $\beta^A$& A-state node disease spreading rate\\
    $\beta^U$& U-state node disease spreading rate\\
    $\mu$& Disease recovery rate\\
    $\gamma$& Rate of disease suppression \\
    $ $ & by information dissemination\\
    $\Omega$& Nodes that do not cooperate with epidemic prevention\\
    
    \bottomrule%第三道横线
    \end{tabular}
  \end{table}

 Hereby, we display the construction process of our model. The basic structure is a two-layer network. The information propagation network is modeled as a BA scale-free network primarily due to the fact that in information networks, the majority of nodes have degrees lower than the average degree, while a small number of nodes have degrees significantly higher than the average degree, such as news media outlets, which is a typical characteristic of BA networks. On the other hand, the disease transmission network is a physical contact network, and the small-world (WS) network is more suitable for it. In accordance with the work already done by previous authors, the coupled spread of disease and information is usually studied in a two-layer network model. In our model, epidemics occur in the physical contact layer, while information spreads through the virtual layer. The nodes of these two networks are one-to-one, but their respective connections are different, that is, the topology of the two layers is different. Let the upper network be the virtual contact network, where the information spreads. Let the lower network be the physical contact network, where the epidemic spread occurs. As shown in Fig. 1. There is a class of $\Omega$ nodes in the information layer, which only participate in the virus propagation process in the physical contact layer, not participating in the virtual layer information propagation process. In the virtual layer, $\Omega$ nodes are removed, while in the physical contact layer, their corresponding information state is equivalent to the unawareness state (U-state).

The diffusion of information in the virtual awareness layer follows the unaware-aware-unaware (UAU) model, where ``U'' denotes the unaware state, i.e., this node is unaware of the risk of infection or artificially chooses to ignore the risk of infection and does not take any preventive measures. ``A'' denotes the aware state, where nodes in this state are aware of the risk of disease and thus take certain measures to reduce their risk of being infected. U-state node is transformed into an A-state node under one of the following two conditions:

(i)	It is informed by its neighbors in the virtual awareness layer with the probability $\lambda$;

(ii) The node is infected by the virus. 

Notably, the node in the A-state loses awareness due to time and returns to the U-state with the probability $\delta$. The individual may forget or stop caring about it after the corresponding seasonal epidemic occurs.

\begin{figure}[ht]
    \centering
    \includegraphics[scale = 0.38]{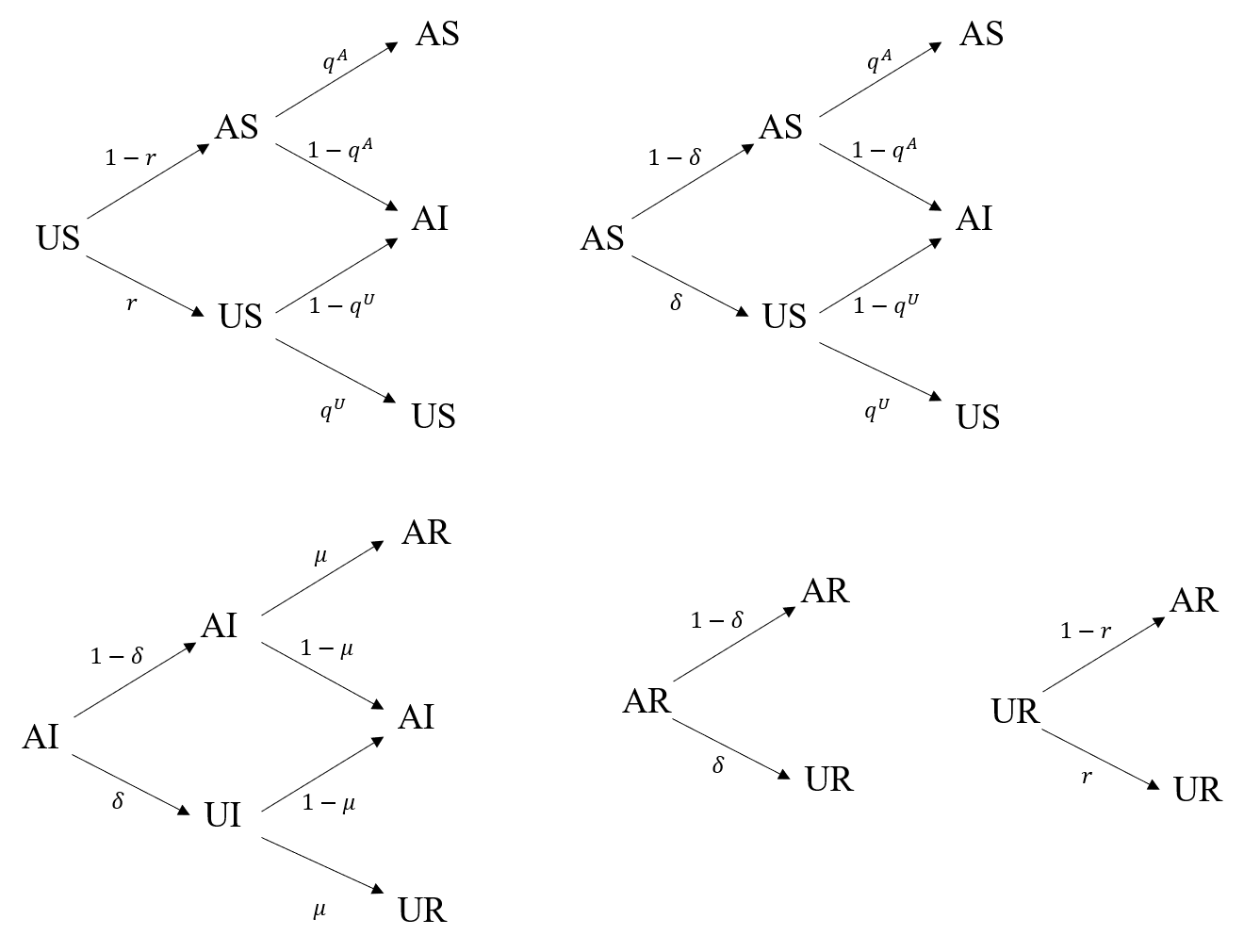}
      
    \caption{\textbf{The transfer probability tree of five states.} It is a visual description of Eq. (1) and Eq. (2), the state that a node is currently in is taken as the root node of this tree, and the probability that it will be in each state at the next time step can be derived from this tree. The first layer of the first tree represents whether an individual in the US-state transitions to A-state, while the second layer represents whether the individual transitions to I-state. The final probability of the transition is the product of the probabilities associated with the two arrows. The remaining trees are analogous to the first one.}
\end{figure}
  
The classical susceptible-infected-recovered (SIR) model is used for epidemic spreading. In the physical contact layer: ``S'' is the susceptible state, ``I'' denotes the infected state, and ``R'' represents the recovered state. If an individual in the susceptible state encounters an individual in the infected state, he/she will be infected with the probability $\beta$. This probability is different from awareness nodes (A-state) and unawareness nodes (U-state), the probability of a susceptible individual in the unawareness state being infected by a node in the infected state is $\beta_{U}$, while the probability of a susceptible individual in the awareness state being infected is $\beta_A = \gamma \cdot \beta_U$, where $\gamma \in [0,1)$. When $\gamma = 0$, it indicates that the awareness node is completely immune to the disease. When a node is infected, it will become aware immediately, i.e., change to the A-state, and will not change until it recovers. At the same time, an individual in the infected state will change to the recovered state with the probability $\mu$. For individuals in the recovered state, he/she will not be infected again and has no ability to infect others.

According to the above description, a node can be in the following five states: unaware and susceptible (US), aware and susceptible (AS), aware and infected (AI), unaware and recovered (UR), and aware and recovered (AR). It is noted that the unaware and infected (UI) state does not exist, because a node will immediately become aware once it is infected. Fig. 2 shows the change pattern between these states more clearly in the form of states transfer probability tree. The denotations of the main parameters are listed in Table 1.

In brief, we construct an awareness-virus co-evolution two-layer network model, in which the awareness layer contains a special kind of nodes ($\Omega$ nodes). They are always in the U-state and do not interact with other nodes.

\section{Theoretical threshold analysis based on MMCA}

Infectious disease threshold analysis is critical to controlling infectious diseases. In this section, we utilize the MMCA to analyze the threshold of infectious disease spreading. The probability tree in Fig. 2 reveals the transfer relationship between different states, and we can get a total of five possible states. We next establish the dynamic equations for the transition between the five states in Lemma 1.

\begin{lemma}
For each node $i$, the dynamic equations of MMCA are

\begin{equation}
    \left\{\begin{aligned}
        P_{i}^{A S}(t+1) &=P_{i}^{A S}(t)(1-\delta) q_{i}^{A}(t)+P_{i}^{U S}(t)[1-r_{i}(t)] q_{i}^{A}(t) \\
        P_{i}^{U S}(t+1) &=P_{i}^{A S}(t) \delta q_{i}^{U}(t)+P_{i}^{U S}(t) r_{i}(t) q_{i}^{U}(t) \\
        P_{i}^{A I}(t+1) &=P_{i}^{A S}(t)\left\{(1-\delta)\left[1-q_{i}^{A}(t)\right]+\delta\left[1-q_{i}^{U}(t)\right]\right\}\\
        &+P_{i}^{U S}(t)\{r_{i}(t)[1-q_{i}^{U}(t)]\\
        &+[1-r_{i}(t)t][1-q_{i}^{A}(t)]\}+P_{i}^{A I}(t)(1-\mu)\\
        P_{i}^{A R}(t+1) &=P_{i}^{A I}(t)(1-\delta) \mu+P_{i}^{A R}(t)(1-\delta)\\
        &+P_{i}^{U R}(t)\left[1-r_{i}(t)\right] \\
        P_{i}^{U R}(t+1) &=P_{i}^{A I}(t) \delta \mu+P_{i}^{A R}(t) \delta+P_{i}^{U R}(t) r_{i}(t)
    \end{aligned}\right.    
\end{equation}
In the awareness layer, the probability that individual $i$ does not switch from the U-state to the A-state at time $t$ is defined as $r_i(t)$; in the disease spreading layer, the probability that individual $i$ is not infected by its neighbors in the $A$ and $U$ states at time $t$ is defined as $q_i^A(t)$, and $q_i^U(t)$.

\end{lemma}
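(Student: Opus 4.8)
The plan is to read each line of Eq.~(1) as the total-probability bookkeeping of a discrete-time Markov chain whose five states are $\{US, AS, AI, AR, UR\}$, obtained by multiplying out the branches of the transfer tree in Fig.~2 and then summing, for each target state, over every source state that can flow into it. Before assembling the equations I would first pin down the three auxiliary weights on which the tree's edges depend. Adopting the standard MMCA mean-field closure that treats neighbours' states at time $t$ as independent, the probability that node $i$ is \emph{not} informed through the awareness layer is the product over its awareness-layer neighbours of the per-edge non-transmission probabilities, $r_i(t)=\prod_j\bigl(1-a_{ij}\lambda\,P_j^{A}(t)\bigr)$, where $P_j^{A}(t)=P_j^{AS}(t)+P_j^{AI}(t)+P_j^{AR}(t)$ is the probability that $j$ is aware. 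Likewise, since only $AI$ nodes are infectious and they transmit at rate $\beta^{U}$ or $\beta^{A}=\gamma\beta^{U}$ according to the susceptible node's own awareness, the non-infection probabilities are $q_i^{U}(t)=\prod_j\bigl(1-b_{ij}\beta^{U}P_j^{AI}(t)\bigr)$ and $q_i^{A}(t)=\prod_j\bigl(1-b_{ij}\beta^{A}P_j^{AI}(t)\bigr)$. For an $\Omega$ node one sets $r_i\equiv 1$ and drops it from every awareness product, so only its disease-layer dynamics remain active.

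With these weights fixed, I would build each right-hand side by walking the tree state by state, taking the awareness move first (stay/leave $A$ with weight $1-\delta$ or $\delta$; stay/enter $A$ with weight $r_i$ or $1-r_i$) and the disease move second, selecting $q_i^{A}$ versus $q_i^{U}$ according to the \emph{interim} awareness reached after the first move. Reading off the paths terminating in each target state and adding them yields the five lines: e.g.\ $AS$ is reached only by remaining susceptible while aware, from $AS$ via $(1-\delta)q_i^{A}$ and from $US$ via $(1-r_i)q_i^{A}$; $US$ only by $AS\!\to\!\delta q_i^{U}$ and $US\!\to\! r_i q_i^{U}$; the recovered lines follow from the $\mu/(1-\mu)$ split of an $AI$ node combined with the $\delta/(1-\delta)$ awareness split, together with the UAU flow $UR\!\to\!AR$ at weight $1-r_i$ and $AR\!\to\!UR$ at weight $\delta$. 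As a global consistency check I would verify that the coefficient of each source probability across the five lines sums to one, which forces $\sum_{XY}P_i^{XY}(t+1)=\sum_{XY}P_i^{XY}(t)$ and certifies that no path has been double-counted or dropped.

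The step I expect to be the genuine obstacle, as opposed to routine tree-walking, is the $AI$ line, since that is where the two layers are most tightly coupled. Two rules collide there: infection forces immediate awareness, and the awareness update is applied before the infection test. Consequently a susceptible node that \emph{loses} awareness on the first move but is then infected must still land in $AI$ rather than the forbidden $UI$, so the $\delta[1-q_i^{U}]$ branch out of $AS$ and the $r_i[1-q_i^{U}]$ branch out of $US$ both feed $AI$ even though the interim label was unaware; symmetrically the $(1-\delta)[1-q_i^{A}]$ and $(1-r_i)[1-q_i^{A}]$ branches do too. Pairing $q_i^{A}$ versus $q_i^{U}$ against the interim rather than the terminal awareness label is the single place where a mislabelling would silently corrupt the threshold derived downstream, so I would treat the $AI$ line as the anchor and check the remaining four lines for consistency against it and against the sum-to-one identity.
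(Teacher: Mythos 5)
Your proposal is correct and follows essentially the same route as the paper: the mean-field product forms for $r_i(t)$, $q_i^A(t)$, $q_i^U(t)$ followed by a total-probability walk over the transfer tree of Fig.~2, summing the branches that terminate in each target state. You are in fact more careful than the paper's own proof, which only writes out the $AS$ line explicitly and dismisses the rest as analogous, whereas you isolate the genuinely delicate $AI$ line (interim-awareness pairing of $q_i^A$ versus $q_i^U$, with all infection branches forced into $AI$) and add the column-sum-to-one check; you also correctly write $P_j^{AI}$ in the non-infection products where the paper's Eq.~(2) has a typographical $P_i^{AI}$.
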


\begin{proof}
Respectively, the probability that $r_i(t)$, $q_i^A(t)$, and $q_i^U(t)$ can be described as follows:
\begin{equation}
    \left\{\begin{array}{l}
        r_{i}(t)=\prod_{j}\left[1-a_{j i} P_{j}^{A}(t) \lambda\right] \\
        q_{i}^{A}(t)=\prod_{j}\left[1-b_{j i} P_{i}^{A I}(t) \beta^{A}\right] \\
        q_{i}^{U}(t)=\prod_{j}\left[1-b_{j i} P_{i}^{A I}(t) \beta^{U}\right]
    \end{array}\right.
\end{equation}
where $a_{ij},b_{ij}$ are the elements in the adjacency matrix $A,B$. If there are connected edges between nodes $i$ and $j$, then $a_{ij}$ or $b_{ij}=1$, otherwise its value is 0.

Primarily, define $A=(a_{ij})_{N\times N}, B=(b_{ij})_{N\times N}$ as the adjacency matrix of the upper (information diffusion) network and the lower (disease propagation) network respectively. In each time step, an individual can only be in one of the five states. Define the proportion of states of individual $i$ at time $t$ as $P_i^{AS}(t), P_i^{AI}(t), P_i^{US}(t), P_i^{UR}(t)$, and $P_i^{AR}(t)$.

Next, to prove Lemma 1, it is essential that the state transfer equation holds, i.e., that all states that can be transferred to the left part of the equation are multiplied by their relative transfer probabilities.

Specifically, as the evolution of all these states of nodes follows a Markov process, the proportion of individuals in the AS-state at the moment $t$ is $P_{i}^{A S}(t)$, the probability that an individual in AS-state has not transformed from the A-state to the U-state is $(1-\delta)$, the probability that individual is not infected is $q_{i}^{A}(t)$.
Thus, the individuals in the AS state at the moment $t$ are still in the AS state at the moment $t+1$ can be calculated by Eq. 3.
\begin{equation}
    P_{i}(t+1) =P_{i}^{A S}(t)(1-\delta) q_{i}^{A}(t)
\end{equation}

Similarly, the proportion of individuals in the US-state at the moment $t$ is $P_{i}^{U S}(t)$, the probability that an individual in US-state is transformed from the U-state to the A-state is $[1-r_{i}(t)]$, the probability that an AS-state individual is not infected is $q_{i}^{A}(t)$.
So the probability of individuals in the US-state at the moment $t$ and in the AS-state at the moment $t+1$ is expressed as
\begin{equation}
    P_{i}(t+1) =P_{i}^{U S}(t)[1-r_{i}(t)] q_{i}^{A}(t)
\end{equation}

Since the probability of transforming to the AS-state is 0 for all other states, combining Eqs. 3 and 4, the individuals in the AS-state at time $t+1$ is
\begin{equation}
    P_{i}^{A S}(t+1) =P_{i}^{A S}(t)(1-\delta) q_{i}^{A}(t)+P_{i}^{U S}(t)[1-r_{i}(t)] q_{i}^{A}(t)
\end{equation}

In conclusion, the first equation in Eq. 1 holds. Analogously, we can prove that the rest equations in Eq. 1 hold.

The result follows.
\end{proof}

Lemma 1 describes the relationship between the probability value of individual $i$ being in each state at the next time step $t+1$ and the probability value of each state in individual $i$ at the current time step $t$. It is the set of propagation dynamics equations for the general case. To find the threshold equation, we assume that propagation occurs near the threshold, then we can say

\begin{theorem}
In our proposed model, as the propagation occurs near the threshold, Lemma 1 can be abbreviated as
\begin{equation}
    \left\{\begin{array}{l}
        P_{i}^{A S} =P_{i}^{U S}\left(1-r_{i}\right)+P_{i}^{A S}(1-\delta) \\
        P_{i}^{U S} =P_{i}^{U S} r_{i}+P_{i}^{A S} \delta \\
        \varepsilon_{i}=\varepsilon_{i}(1-\mu)+P_{i}^{U S}\left[r_{i} \alpha^{U}+\left(1-r_{i}\right) \alpha^{A}\right]\\
        +P_{i}^{A S}\left[\delta \alpha^{U}+(1-\delta) \alpha^{A}\right]
    \end{array}\right.    
\end{equation}
\end{theorem}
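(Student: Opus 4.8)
The plan is to read the phrase ``propagation occurs near the threshold'' as the statement that the epidemic is only marginally able to invade, so that the infected fraction at every node is infinitesimal. Concretely, I would set $P_i^{AI}(t)=\varepsilon_i$ with $\varepsilon_i\ll 1$, regard the recovered compartments $P_i^{AR}$ and $P_i^{UR}$ as higher-order small (they are fed only by the already-small infected compartment), and pass to the stationary state so that the time index may be dropped, i.e.\ $P_i^{X}(t+1)=P_i^{X}(t)=:P_i^{X}$. I would also introduce the first-order infection loads $\alpha_i^{A}=\sum_j b_{ji}\beta^{A}\varepsilon_j$ and $\alpha_i^{U}=\sum_j b_{ji}\beta^{U}\varepsilon_j$, which are the quantities written as $\alpha^{A},\alpha^{U}$ in the claim (with the node index suppressed).

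The first technical step is to linearize the no-infection probabilities of Eq.~(2). Since each infected-neighbor probability $P_j^{AI}=\varepsilon_j$ is small, expanding the products and discarding terms of order $\varepsilon^{2}$ gives $q_i^{A}\approx 1-\alpha_i^{A}$ and $q_i^{U}\approx 1-\alpha_i^{U}$, equivalently $1-q_i^{A}\approx\alpha_i^{A}$ and $1-q_i^{U}\approx\alpha_i^{U}$. These are the only approximations required; everything that follows is substitution into Lemma~1.

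Next I would treat the three surviving equations separately, paying close attention to orders of smallness. In the $AS$ equation I factor out $q_i^{A}=1+O(\varepsilon)$ from both terms on the right; because $P_i^{AS}$ and $P_i^{US}$ are $O(1)$, keeping only the leading order yields $P_i^{AS}=P_i^{US}(1-r_i)+P_i^{AS}(1-\delta)$, and the identical argument with $q_i^{U}$ produces the $US$ equation. In the $AI$ equation every term is already $O(\varepsilon)$, so substituting $1-q_i^{A}\approx\alpha_i^{A}$, $1-q_i^{U}\approx\alpha_i^{U}$ and $P_i^{AI}=\varepsilon_i$ reproduces $\varepsilon_i=\varepsilon_i(1-\mu)+P_i^{US}[r_i\alpha^{U}+(1-r_i)\alpha^{A}]+P_i^{AS}[\delta\alpha^{U}+(1-\delta)\alpha^{A}]$. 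Finally I would verify that discarding the $AR$ and $UR$ equations is self-consistent: their stationary solutions are $O(\varepsilon)$ and do not enter either the leading-order susceptible balance or the linearized infected equation.

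The main obstacle I expect is the \emph{asymmetric} bookkeeping of orders: the $AS$ and $US$ equations must be truncated at zeroth order in $\varepsilon$ (so the $\alpha$-corrections drop out), whereas the $AI$ equation must be retained to first order (it vanishes identically at zeroth order). Getting this cut consistent, and arguing that the neglected recovered compartments together with the $O(\varepsilon^{2})$ terms genuinely do not affect the linearized system that fixes the threshold, is the delicate part; the algebra of substituting Eq.~(2) into Lemma~1 is otherwise routine.
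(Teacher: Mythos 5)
Your proposal is correct and follows essentially the same route as the paper: impose stationarity, set $P_i^{AI}=\varepsilon_i\ll 1$, linearize $q_i^{A}$ and $q_i^{U}$ to obtain $\alpha^{A},\alpha^{U}$, replace $q_i^{A},q_i^{U}$ by $1$ in the two susceptible equations, and drop the recovered equations as higher-order small. Your explicit bookkeeping of which equations are truncated at zeroth versus first order in $\varepsilon$ is slightly more careful than the paper's wording, but it is the same argument.
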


\begin{proof}

With the system of equations in Eq. 1, as $t\to\infty$, each state ($P_i^{AS}(t), P_i^{AI}(t), P_i^{US}(t), P_i^{UR}(t), P_i^{AR}(t)$) reaches steady state, we yield

\begin{equation}
    \lim_{t\to\infty}{P_i^*(t+1)}=P_i^*(t)
\end{equation}
where the $*$ sign represents various states, such as $UR, AI, US,$ etc.

Since the individual can only be in one of the five states at each time step, the sum of the proportions of individuals in each state at each time step is 1. When the transmission rate of an infectious disease is near the threshold, the disease does not break out and the probability of an individual being in the infected state tends to be close to 0.
Consequently, we have:
\begin{equation}
    P_i^{AI} = \varepsilon \ll 1
\end{equation}
According to Eq. 8, we obtain:
\begin{equation}
    q_{i}^{A}=\prod_{j}[1-b_{i j} P_{j}^{A I}(t) \beta^{A}] \approx(1-\beta^{A} \sum_{j} b_{i j} \varepsilon)
\end{equation}

Let $\alpha^{A}=\beta^{A} \sum_{j} b_{j i} \varepsilon_{j}, \alpha^{U}=\beta^{U} \sum_{j} b_{j i} \varepsilon_{j}$, since near the threshold $p_i^{AI}\to0$, the proportion of infected state nodes and the recovered state nodes are sufficiently small, i.e., $P_i^{AR}\to0$ and $P_i^{UR}\to0$, then the last two equations of Lemma 1 can be ignored. Combined with Eq. 9, we replace $q_{i}^{A}$ in the third equation in Eq. 1. Thus the third equation of Eq. 6 holds.

Since propagation occurs near the threshold, we have $q_{i}^{A}(t) \to 1$ and $q_{i}^{U}(t)\to 1$, replacing $q_{i}^{A}(t)$, $q_{i}^{U}(t)$ with 1 in Eq.1, the first and second equations of Eq. 6 holds.

The result follows.
\end{proof}

Theorem 1 gives the set of propagation dynamics equations near the threshold, by which we can obtain the threshold equation.
\begin{theorem}
    If the disease recovery rate is $\mu$, and the greatest eigenvalue of the matrix $H (h_{ij}=[1-(1-\gamma)P_i^A]b_{ji})$ is $\Lambda_{\max}$, then the propagation threshold is
    \begin{equation}
    \beta_{c}^{U}=\frac{\mu}{\Lambda_{\max }}
    \end{equation}
\end{theorem}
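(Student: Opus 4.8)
The plan is to start from the near-threshold steady-state system supplied by Theorem~1 (Eq.~(6)) and reduce its third equation to an eigenvalue problem for the small infection probabilities $\varepsilon_i = P_i^{AI}$. First I would cancel the common $\varepsilon_i(1-\mu)$ term in that third equation, leaving
\begin{equation}
\mu\varepsilon_i = P_i^{US}\bigl[r_i\alpha^U + (1-r_i)\alpha^A\bigr] + P_i^{AS}\bigl[\delta\alpha^U + (1-\delta)\alpha^A\bigr].
\end{equation}
The key observation is that regrouping the right-hand side by $\alpha^U$ and $\alpha^A$ produces the coefficients $P_i^{US}r_i + P_i^{AS}\delta$ and $P_i^{US}(1-r_i) + P_i^{AS}(1-\delta)$, which are exactly the right-hand sides of the second and first equations of Eq.~(6). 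Hence these two brackets collapse to $P_i^{US}$ and $P_i^{AS}$, giving the compact form $\mu\varepsilon_i = P_i^{US}\alpha^U + P_i^{AS}\alpha^A$.

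Next I would substitute the definitions $\alpha^U = \beta^U\sum_j b_{ji}\varepsilon_j$ and $\alpha^A = \beta^A\sum_j b_{ji}\varepsilon_j$ together with the model relation $\beta^A = \gamma\beta^U$, and factor out $\beta^U\sum_j b_{ji}\varepsilon_j$ to reach $\mu\varepsilon_i = \beta^U\,(P_i^{US} + \gamma P_i^{AS})\sum_j b_{ji}\varepsilon_j$. Invoking the near-threshold normalization $P_i^{US} + P_i^{AS}\approx 1$ together with $P_i^{AS}\approx P_i^A$ (since $P_i^{AI},P_i^{AR}\to 0$) turns the prefactor into $P_i^{US} + \gamma P_i^{AS} \approx 1-(1-\gamma)P_i^A$. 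With $h_{ij} = [1-(1-\gamma)P_i^A]b_{ji}$ this is precisely $\mu\varepsilon_i = \beta^U\sum_j h_{ij}\varepsilon_j$, i.e.\ the matrix eigenvalue equation $H\vec{\varepsilon} = (\mu/\beta^U)\vec{\varepsilon}$.

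Finally I would extract the threshold from this eigenvalue relation. A nontrivial, physically admissible solution---a vector $\vec{\varepsilon}$ of small nonnegative outbreak probabilities---can exist only when $\mu/\beta^U$ is an eigenvalue of the nonnegative matrix $H$; by the Perron--Frobenius theorem $H$ has a largest eigenvalue $\Lambda_{\max}$ carrying a nonnegative eigenvector, which is the relevant branch. As $\beta^U$ grows the quantity $\mu/\beta^U$ shrinks, so the first value of $\beta^U$ at which a sustained outbreak becomes possible is the one for which $\mu/\beta^U = \Lambda_{\max}$, yielding $\beta_c^U = \mu/\Lambda_{\max}$.

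The step I expect to require the most care is this last one: justifying that it is the \emph{largest} eigenvalue, rather than some other eigenvalue of $H$, that governs the epidemic threshold. This rests on the nonnegativity of $H$ and on the physical requirement that $\vec{\varepsilon}$ have nonnegative entries, so that the Perron--Frobenius eigenpair is the only admissible one; the earlier algebraic collapse of the brackets via the first two equations of Eq.~(6) is essentially bookkeeping by comparison.
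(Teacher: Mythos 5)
Your proposal is correct and follows essentially the same route as the paper: both reduce the third steady-state equation of Eq.~(6) to $\mu\varepsilon_i=\beta^{U}\left[1-(1-\gamma)P_i^{A}\right]\sum_j b_{ji}\varepsilon_j$ and read off the threshold from the largest eigenvalue of $H$. Your version is in fact more explicit where the paper merely writes ``simplifying Eq.~6 yields'' (the collapse of the two brackets via the first two steady-state equations) and more careful in the final step, where you justify the choice of $\Lambda_{\max}$ via Perron--Frobenius rather than the paper's somewhat informal appeal to ``the smallest $\beta^{U}$ satisfying Eq.~12.''
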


\begin{proof}
    Simplifying Eq. 6 yields:
\begin{equation}
    \begin{array}{l}
        \mu \varepsilon_{i} =\left[P_{i}^{A} \gamma+\left(1-P_{i}^{A}\right)\right] \beta^{U} \sum_{j} b_{j i} \varepsilon_{j}\\
        =\beta^{U}\left[1-(1-\gamma) P_{i}^{A}\right] \sum_{j} b_{j i} \varepsilon_{j}
    \end{array}   
\end{equation}

Transforming Eq. 11, we obtain Eq. 12:
\begin{equation}
    \sum_{j}\left\{\left[1-(1-\gamma) P_{i}^{A}\right] b_{j i}-\frac{\mu}{\beta^{U}} t_{j i}\right\}\varepsilon_{j}=0
\end{equation}
where $t_{ij}$ is an element of the identity adjacency matrix, and $h_{ij}=[1-(1-\gamma) P_{i}^{A}] b_{j i}$ is an element of the matrix $H$.

By the definition of the matrix eigenvalues, we have 
\begin{equation}
    \{[1-(1-\gamma) P_{i}^{A}] b_{j i} - \Lambda_{\max }t_{ji}\}\varepsilon_{j} = 0
\end{equation}

Combining Eq. 12 we get
\begin{equation}
    \frac{\mu}{\beta^{U}} t_{j i} = \Lambda_{\max }t_{ji}
\end{equation}

Then the threshold problem $\beta = \beta_c^U$ for infectious diseases can be considered as the solution of the $H$ eigenvalue problem, i.e., the smallest $\beta^U$ satisfies Eq. 12.

The result follows.
\end{proof}

Theorem 2 shows that the threshold of the two-layer network propagation model is not only related to the network topology, but also related to the density of individuals in A-state.

\section{Numerical simulation}

From this section, we begin to explore the influence of $\Omega$ nodes by the simulations of proposed models, which are supposed to follow the theory of Sec. \uppercase\expandafter{\romannumeral3}

First of all, considering the model in Sec. \uppercase\expandafter{\romannumeral2}, the two-layer multiplex network is constructed as follows: The information transmission layer is a Barabási–Albert (BA) network, and the disease spreading layer is a Watts–Strogatz (WS) network. The new connection number $m$ of the BA network is 4, the reconnection probability $p$ of the WS network is 0.1, and the average degree $\left\langle k \right\rangle=4$, the total number of nodes $N=10000$, the initial proportion of infected equals to 0.1\%, the disease recovery rate $\mu = 0.06$, the information recovery rate $\delta = 0.04$, the information notification rate $\gamma = 0.04$. Next, we will study the propagation process when $\Omega$ nodes are randomly selected.
\subsection{Impact of randomly select $\Omega$ nodes}

In order to simulate the virus propagation process in a two-layer network, we conduct extensive numerical simulations. Given a disease-awareness transmission network, we first randomly set $\Omega$ nodes in the awareness layer, randomly set 10 initial infected nodes, and the rest of the nodes as susceptible states. Set the initial time step $t=0$, let the network evolve according to the dynamic process of Eq. 2.
After reaching the steady state, we record the results for enough long time, and take the average value after the steady state as the value of the desired statistic to reduce the error. Thereby, we can accurately estimate the epidemic threshold and the final infection density. The results obtained from the numerical simulations are shown in Fig. 3.
\begin{figure}[ht]
    \centering
    \includegraphics[scale =0.75]{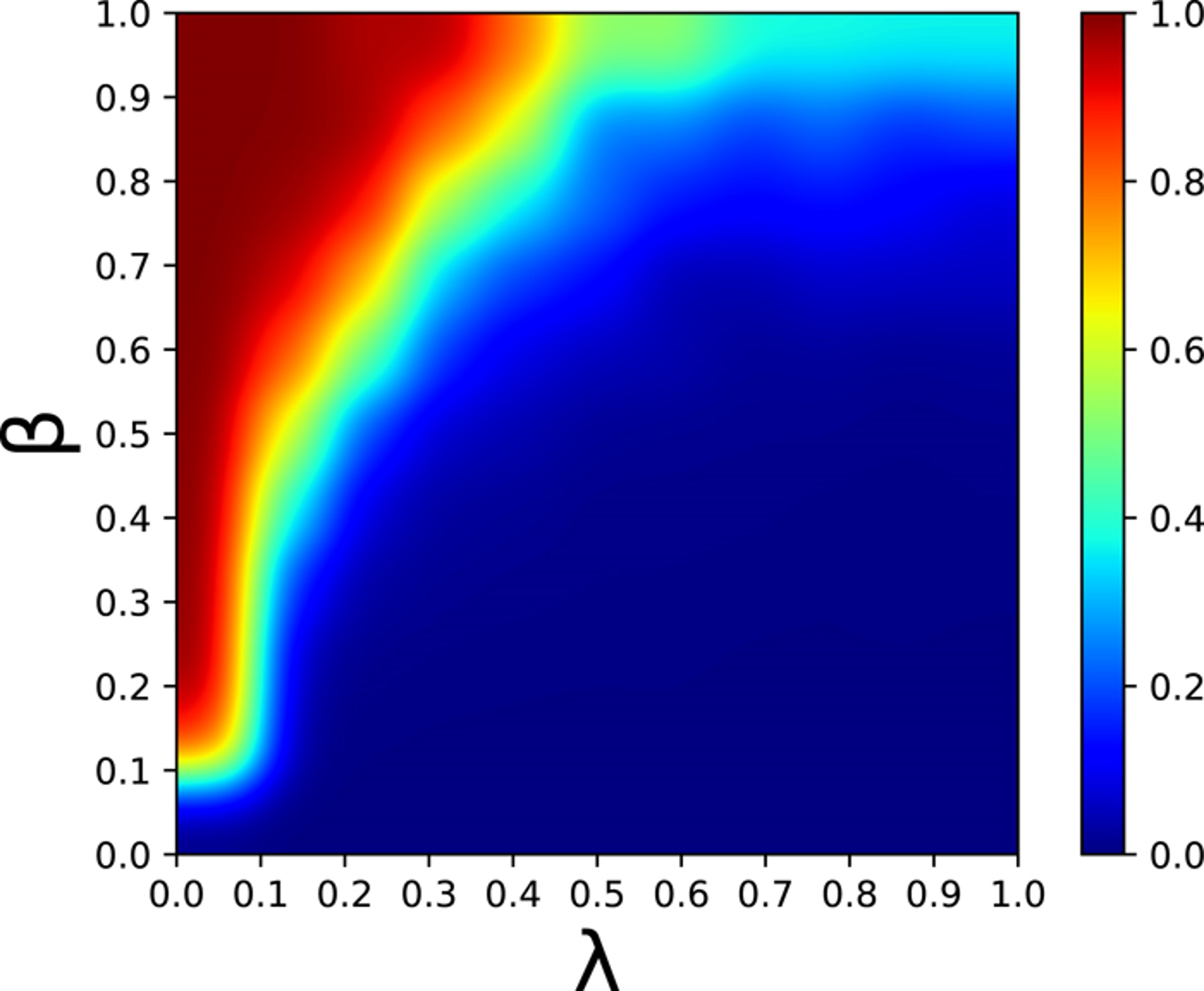}
    
    \caption{\textbf{The proportion of final infections with the information transmission rate and disease spreading rate.} The horizontal axis is the information transmission rate ($\lambda$) and the vertical axis is the disease spreading rate ($\beta$), which refers to the probability of infection of individuals in the unawareness state (U-state), and different colors in the graph mark different proportions of final infections ($\rho^R$), whose specific values are demonstrated in the bar graph on the right. It illustrates that the transmission rate of information has a significant impact on disease propagation and that there exists a clear threshold for disease transmission.}
\end{figure}

As can be seen from Fig. 3, the horizontal coordinates represent the spread rate of information, and the vertical coordinates represent the spread rate of the disease, set them between 0 and 1, and then let the network evolve, with the different colors leaning more towards red representing more infections. By randomly deleting nodes in the process of information dissemination (i.e., permanently setting them to the U-state), we find that in the case where the information dissemination rate $\lambda$ is low, there is a massive outbreak of disease, but as the information dissemination rate $\lambda$ increases, the number of disease infections decreases massively, indicating that the dissemination of information has a great inhibitory effect on the spread of disease. Besides, we can clearly observe a threshold in the propagation of the disease with respect to the parameter $\beta$, which is approximately around 0.1. When the disease transmission rate $\beta$ exceeds the threshold, the disease may erupt in the network; conversely, when the transmission rate $\beta$ is below the threshold, the disease will eventually vanish in the network.

\subsection{The effect of different $\Omega$ nodes on the propagation of two-layer networks}

Subsequently, we will discuss how the selection of different $\Omega$ nodes will affect the network propagation process. We will explore the effect of $\Omega$ nodes on propagation through three different perspectives: degree centrality, betweenness centrality, and clustering coefficient, then compare these properties, and finally lead to a conclusion.

\subsubsection{Impact of degree centrality}

\begin{figure}[ht]
%\begin{tabular}{cc}
\centering
    \includegraphics{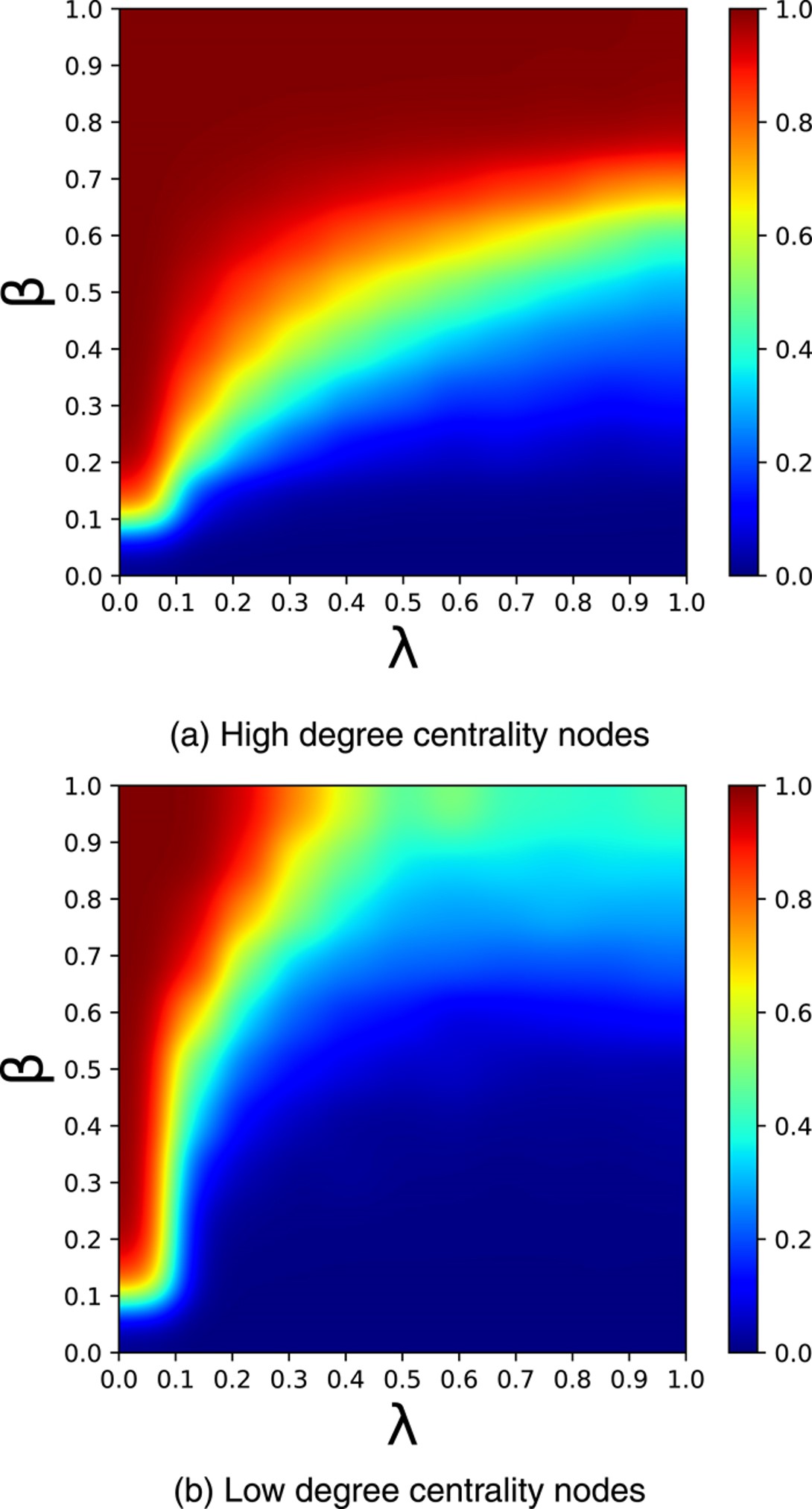}

\caption{\textbf{The proportion of final infected individuals varies with the information transmission rate and disease spreading rate.} the horizontal axis is the information transmission rate ($\lambda$), the vertical axis is the disease spreading rate ($\beta$), which refers to the probability of infection of individuals in the unawareness state (U-state), different colors in the graph mark different proportions of final infected individuals ($\rho^R$), and the experimental results are averaged over 10 simulations. The figure indicates that when $\Omega$ nodes are low-degree centrality nodes, the overall scale of disease propagation is lower than that of high-degree centrality nodes.} 
\end{figure}

Degree is one of the most direct indexes to measure the significance of nodes in epidemic spreading processes. In order to explore the influence of different $\Omega$ nodes on the propagation process in the information network, we first choose the degree of nodes as a measure of $\Omega$ nodes and arrange the nodes according to the magnitude of degree. The 20 nodes with the largest degree are selected, we set them as $\Omega$ nodes in the information dissemination layer, and then take different information dissemination rates and disease dissemination rates. The dissemination results are demonstrated in Fig. 4(a), and then the 20 nodes with the smallest degree are selected and set as $\Omega$ nodes in the information dissemination network. The propagation results are displayed in Fig. 4(b). Combining the two figures, it can be found that the state of the nodes with the large degree in the awareness layer greatly affects the disease spreading process, especially when the disease spreading rate is large, and the nodes with the large degree in the awareness layer will have an antagonistic effect on the disease spreading and inhibit the spreading scale of the infectious disease. When the information transmission rate is low, the transmission effect of both is almost the same.

In order to further explore how the node with a large degree affects the transmission process of infectious diseases, we simulate the transmission process given $\lambda = 0.5, \beta = 0.2,0.5,0.8$, respectively, and let the time be long enough for the number of recovered individuals leveling off, the results are demonstrated in Fig. 5.

\begin{figure}[ht]
\centering
    \includegraphics[scale =0.75]{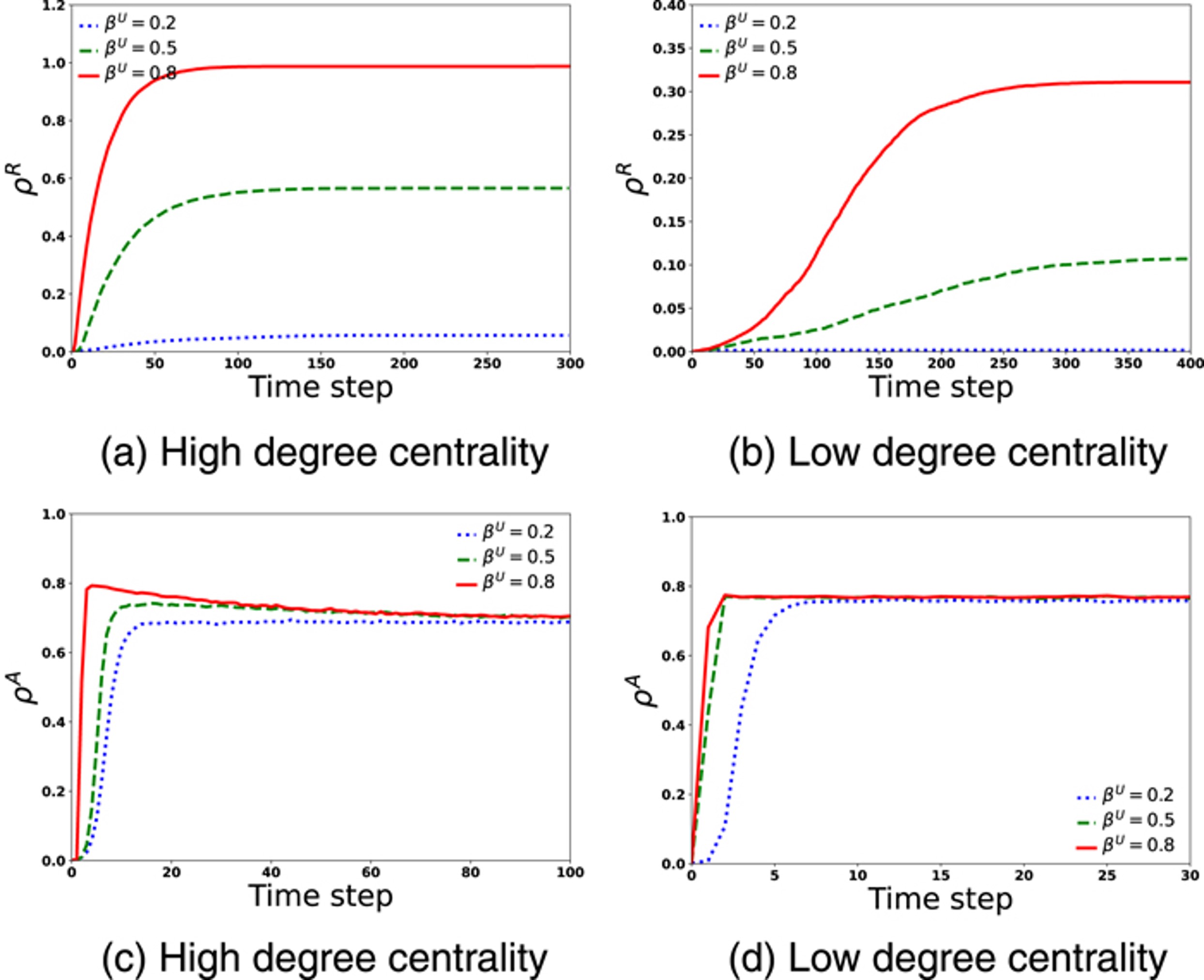}
  
\caption{\textbf{The proportion of $\rho^A$ and $\rho^R$ with different degree centrality $\Omega$ nodes.} (a) and (b) are the curves of the proportion of infected people with time, (c) and (d) are the curves of the corresponding number of people in the awareness state (A-state) with time step. The parameters are set as $N=10000$, $\mu = 0.06$, $\delta = 0.04$, $\gamma = 0.5$, and the experimental results are averaged over 10 simulations. (a) and (b) demonstrate that when $\Omega$ nodes are low-degree centrality nodes, the scale of infection and the time to reach maximum scale are lower than those of high-degree centrality nodes. (c) and (d) indicate that the difference between the two is not significant in terms of information dissemination.} 
\end{figure}

From Figs. 5(a) and 5(b), it can be seen that the disease spreading rate $\beta^U$ greatly affects the transmission process. A larger transmission rate will lead to a larger infection scale ($\rho^R$). Comparing the deletion of nodes with a large degree and a small degree, we can see that after deleting the nodes with a large degree, the scale of disease spreading increases significantly relative to the deletion of nodes with a small degree. Nodes of different degrees will also affect the time when the network reaches a steady state. This phenomenon is reflected in the number of time steps consumed before the number of R-states reaches a steady state, and it can be roughly seen from the Fig. 5 that the time steps required to reach a steady state are around 70 when nodes with a large degree are deleted, while the time steps required to reach steady state are around 250 when nodes with a small degree are deleted. In a side-by-side comparison when $\beta^U = 0.8$, the infection scale is close to 1 for nodes with a large deletion degree, while the infection scale is only about 0.3 for nodes with a small deletion degree, thus it can be concluded that the nodes with the large degree in the information dissemination layer greatly affect the virus propagation process on the two-layer network. From Figs. 5(c) and 5(d), it can be seen that the number of individuals in the A-state will eventually reach a steady state. The number of individuals in the A-state when reaching the steady state is determined by the parameters in the information propagation layer, since the propagation of information can be seen as an independent propagation process. Comparing different $\beta^U$, we can see that the curve with higher $\beta^U$ will reach the peak faster, but the size of the peak will not necessarily be higher than the curve with smaller $\beta^U$. Overall, the number of individuals in the A-state is not greatly affected by $\beta^U$. It is worth noting that in Fig. 5(c), the red curve with $\beta^U$ exhibits a distinct peak, characterized by an initial ascent, followed by a descent, and eventually reaching a stationary state. We hypothesize that the possible reason for this phenomenon is when a node with high centrality acts as the $\Omega$ node, the disease reaches its peak at an exceptionally rapid rate. Consequently, a significant number of individuals get infected in the early stages of disease propagation, and this portion of infected nodes immediately transitions into the A-state. This transition has the potential to surpass the stable state of the information network. As the growth rate of disease transmission decelerates, the density of individuals in the A-state stabilizes around the stable state of the information network.

\subsubsection{Impact of betweenness centrality}
\begin{figure}[ht]
\centering
    \includegraphics{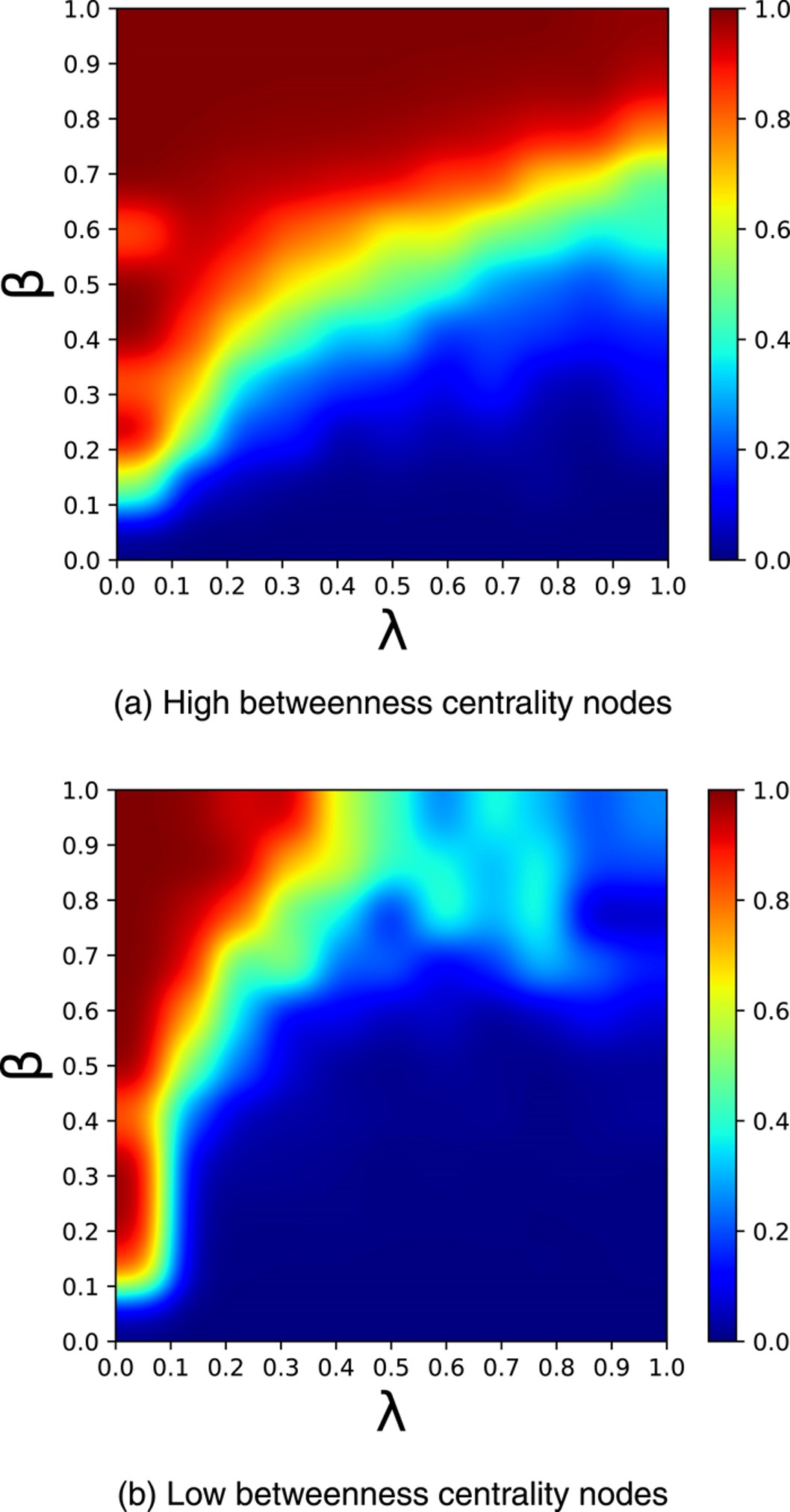}
    
\caption{\textbf{The proportion of final infected individuals varies with the information transmission rate and disease spreading rate.} The horizontal axis is the information transmission rate ($\lambda$), the vertical axis is the disease spreading rate ($\beta$), which refers to the probability of infection of individuals in the unawareness state (U-state), different colors in the graph mark different proportions of final infected individuals ($\rho^R$), and the experimental results are averaged over 10 simulations. This figure demonstrates that nodes with low betweenness centrality have a smaller scale of disease propagation, but both exhibit certain oscillations.} 
\end{figure}

Betweenness centrality is the degree of the number of shortest paths through a node in the network, and it is one of the indicators of node importance portrayal, and the median of node $i$ is defined as
\begin{equation}
    B C_{i}=\sum_{s \neq i \neq t} \frac{n_{s t}^{i}}{g_{s t}}
\end{equation}
where $g_{s t}$ is the number of shortest paths from node $s$ to node $t$, and $n_{s t}^{i}$ is the number of shortest paths through node $i$ among the $g_{s t}$ shortest paths from node $s$ to node $t$. From the above definition, we can see that the higher the betweenness centrality of a node is, the higher the probability of information passing through this node in the information network is, which can be interpreted as a ``traffic hub'' node. So we discuss the impact of nodes with different betweenness centrality on the propagation of the two-layer network in the following, and select the 20\% nodes with the largest betweenness centrality, the result is shown in Fig. 6(a). And then, the 20\% lowest betweenness centrality nodes are set as $\Omega$ nodes. The result is shown in Fig. 6(b).

Combining the two figures, it can be found that the larger betweenness centrality individuals play a great role in the suppression of disease in the awareness layer, especially when the transmission rate of information increases. If the larger betweenness centrality individuals in the information network are always in the U-state, the scale of the disease spreading will increase significantly. For the nodes with the small betweenness centrality removed, when the transmission rate of information and disease spreading rate are both large, compared with Fig. 4(b), the scale of infectious disease spreading shows a certain oscillation. However, all the outcomes still show that the scale of infection increases with the increase of disease spreading rate. The reason for the oscillation is conjectured to be that degree centrality has a more essential influence on the scale of disease infection than betweenness centrality in information networks.

To further explore the propagation process of the betweenness centrality, the propagation process was simulated by taking $\lambda = 0.5, \beta = 0.2,0.5,0.8$, respectively, and let the time be long enough for the number of recovered individuals leveling off, the results are displayed in Fig. 7.

\begin{figure}[t]
\centering
    \includegraphics[scale =0.75]{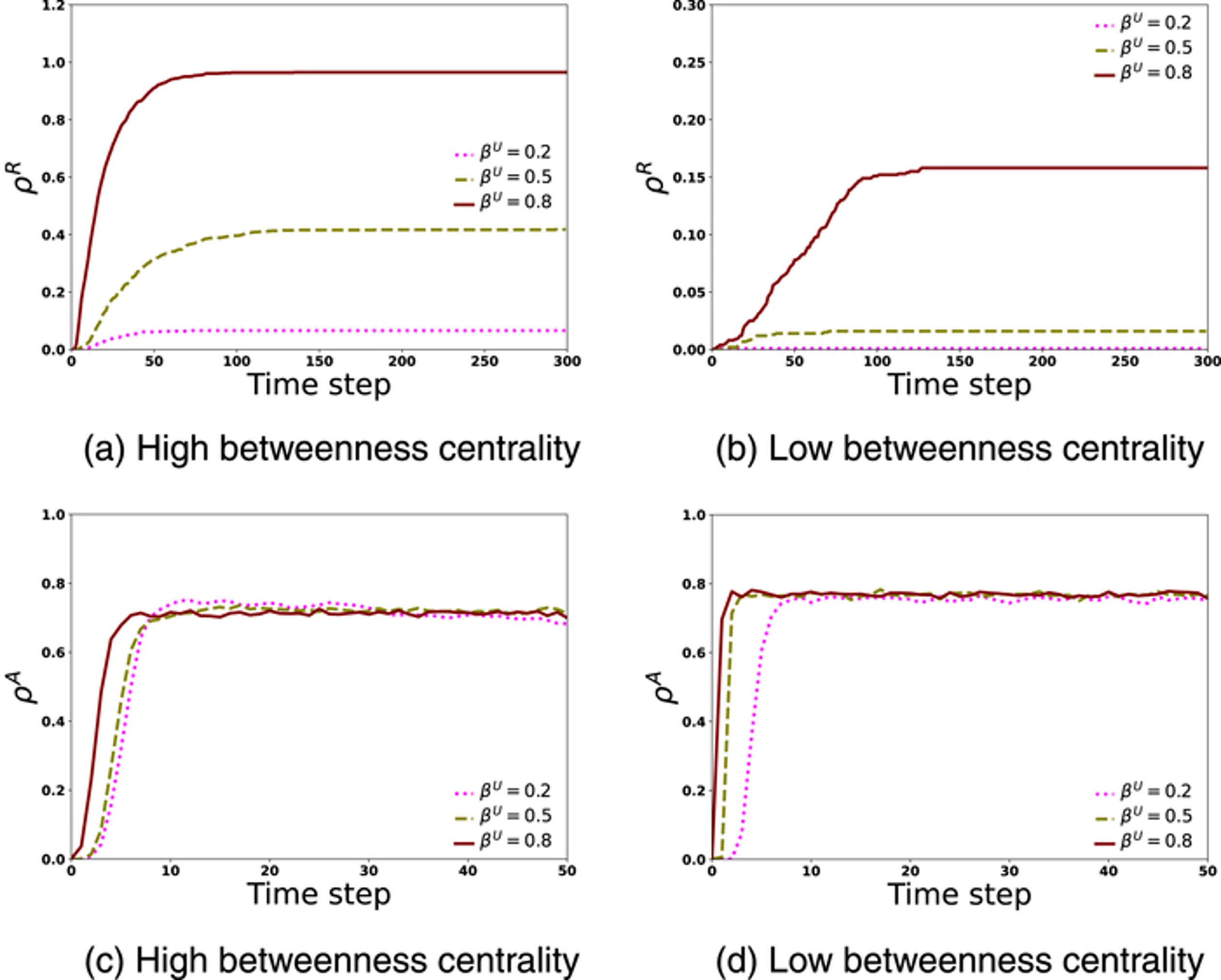}

\caption{\textbf{The proportion of $\rho^A$ and $\rho^R$ with different betweenness centrality $\Omega$ nodes.} (a) and (b) are the curves of the proportion of infected people with time, (c) and (d) are the curves of the corresponding number of people in the awareness state (A-state) with time step. The parameters are set as $N=1000$, $\mu = 0.06$, $\delta = 0.04$, $\gamma = 0.5$, and the experimental results are averaged over 10 simulations. (a) and (b) indicate that the scale and time to reach maximum scale of disease propagation vary with different $\Omega$ nodes. (c) and (d) demonstrate that the larger the value of $\beta^U$, the shorter the time to reach the maximum scale of information dissemination.} 
\end{figure}

From Figs. 7(a) and 7(b), as the information dissemination rate $\lambda=0.5$, the disease dissemination rate $\beta^U$ equally affects the scale of infection of infectious diseases and the time taken to reach a steady state. Comparing the results in Fig. 5, the scale of infection of diseases is less removing small centrality individuals in the intermediary number of information dissemination layer than removing individuals with large degree centrality. From Figs.  7(c) and 7(d), it can be seen that the different values of betweenness centrality do not affect the $\rho^A$ in the awareness layer.

\subsubsection{Impact of clustering coefficient}

In this subsection, we will discuss the effect of the clustering coefficient on disease spreading. In the information network, if a node has a larger clustering coefficient, it is more likely to indirectly notify other individuals about the disease, thus inhibiting the outbreak of the disease. We select the top 20\% nodes with the largest clustering coefficient and the last 20\% nodes with the smallest clustering coefficient, and the results are subsequently presented in Fig. 8.
\begin{figure}[ht]
%\begin{tabular}{cc}
\centering
    \includegraphics{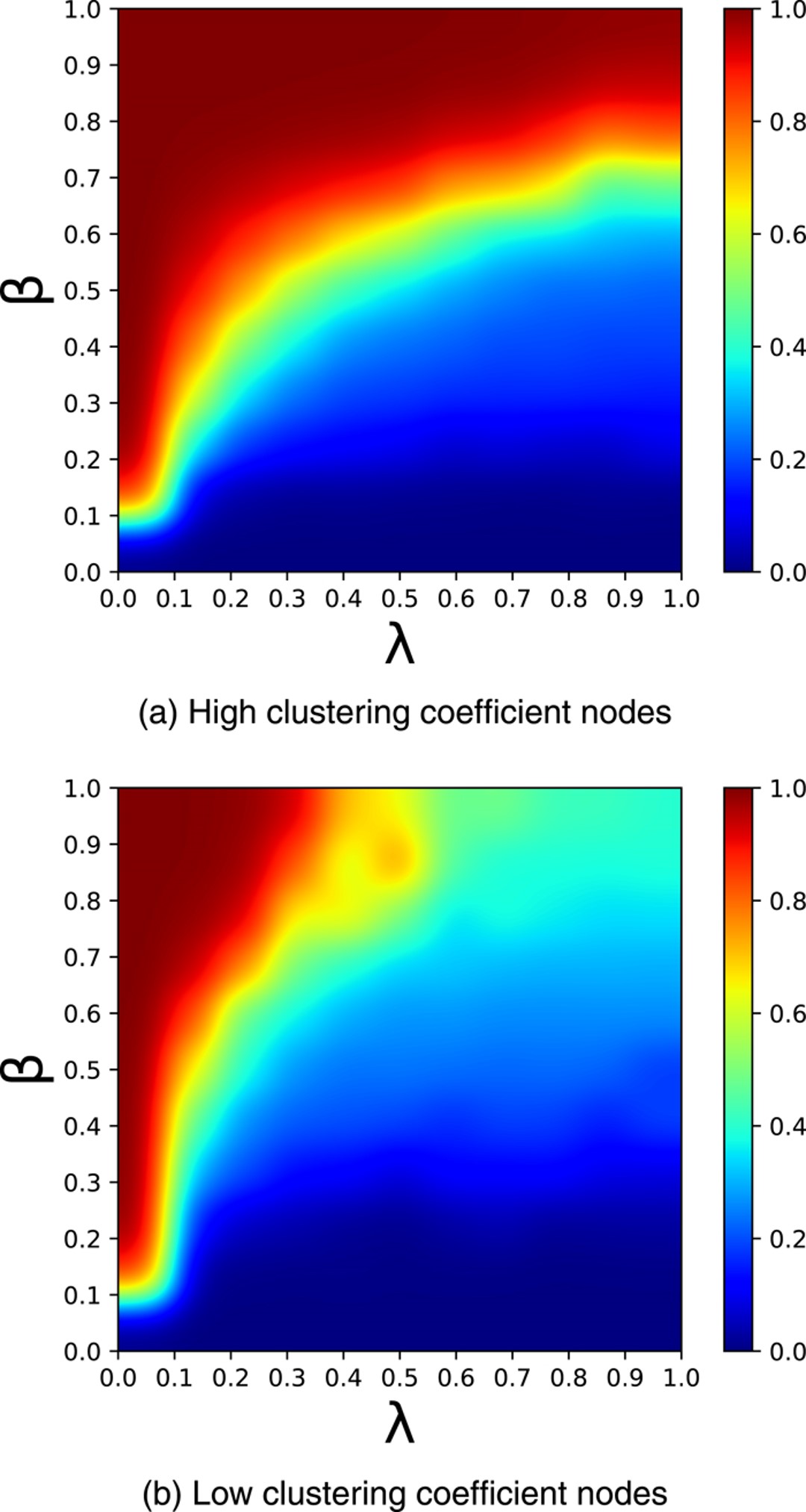}

\caption{\textbf{The proportion of final infected individuals varies with the information transmission rate and disease spreading rate.} The horizontal axis is the information transmission rate ($\lambda$), the vertical axis is the disease spreading rate ($\beta$), which refers to the probability of infection of individuals in the unawareness state (U-state), different colors in the graph mark different proportions of final infected individuals ($\rho^R$), and the experimental results are averaged over 10 simulations. This figure indicates that when $\Omega$ nodes are low clustering coefficient nodes, the overall scale of disease infection is lower. However, compared with Fig. 4 and Fig. 6, the infection scale is relatively larger at low $\beta$.} 
\end{figure}

From Fig. 8, it can be seen that the larger clustering coefficients individuals in the awareness layer also play a great inhibitory role in disease spreading. If they are set as $\Omega$ nodes, they will increase the final infection scale. It is worth mentioning that when the small clustering coefficients nodes are set as $\Omega$ nodes, and the information transmission rate and disease spreading rate are both larger, the final infection scale compared to the mediator centrality and degree centrality as $\Omega$ node is slightly larger. The reason is that the small clustering coefficients individuals in the awareness layer also have a great influence role on the spread of disease. Further, we set $\lambda = 0.5, \beta = 0.2,0.5,0.8$ to study the spread process when nodes with different clustering coefficients are set as $\Omega$ nodes, respectively, and let the time be long enough for the number of recovered individuals leveling off, and the results of the simulation are displayed in Fig. 9.

\begin{figure}[t]
%\begin{tabular}{cc}
\centering
    \includegraphics{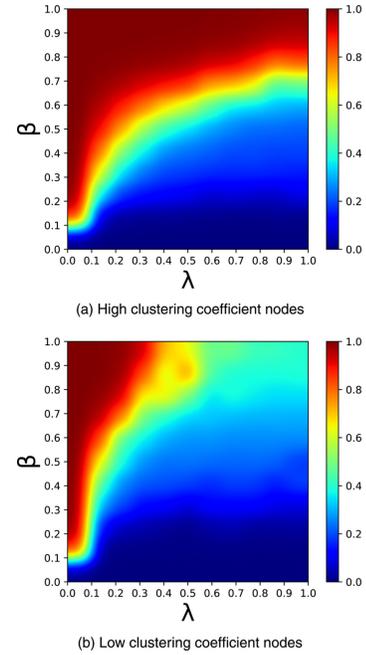}
%\end{tabular}
\caption{\textbf{The proportion of $\rho^A$ and $\rho^R$ with different clustering coefficient $\Omega$ nodes.} (a) and (b) are the curves of the proportion of infected people with time, (c) and (d) are the curves of the corresponding number of people in the awareness state (A-state) with time step. The parameters are set as $N=10000$, $\mu = 0.06$, $\delta = 0.04$, $\gamma = 0.5$, and the experimental results are averaged over 10 simulations. This figure is similar to Fig. 5 and Fig. 7. Overall, low $\beta^U$ will reduce the scale of disease infection, but it has no effect on the scale of information propagation. It only affects the time it takes to reach the maximum scale.} 
\end{figure}

From Figs. 9(a) and 9(b), different clustering coefficient nodes in the awareness layer still have different effects on disease propagation, which are not so significant compared to betweenness centrality and degree centrality. Combined with Figs.  5(c) and 5(d), Figs.  7(c) and 7(d), Figs.  9(c) and 9(d), the number of A-state nodes in the network does not vary by the choice of different $\Omega$ nodes, and different values of $\beta^U$s do not affect the proportion of A-state nodes in the steady state, but only their growth rate.

\subsection{The ratio of $\Omega$ nodes}

\begin{figure*}[ht]
\centering
    \includegraphics[scale =1.5]{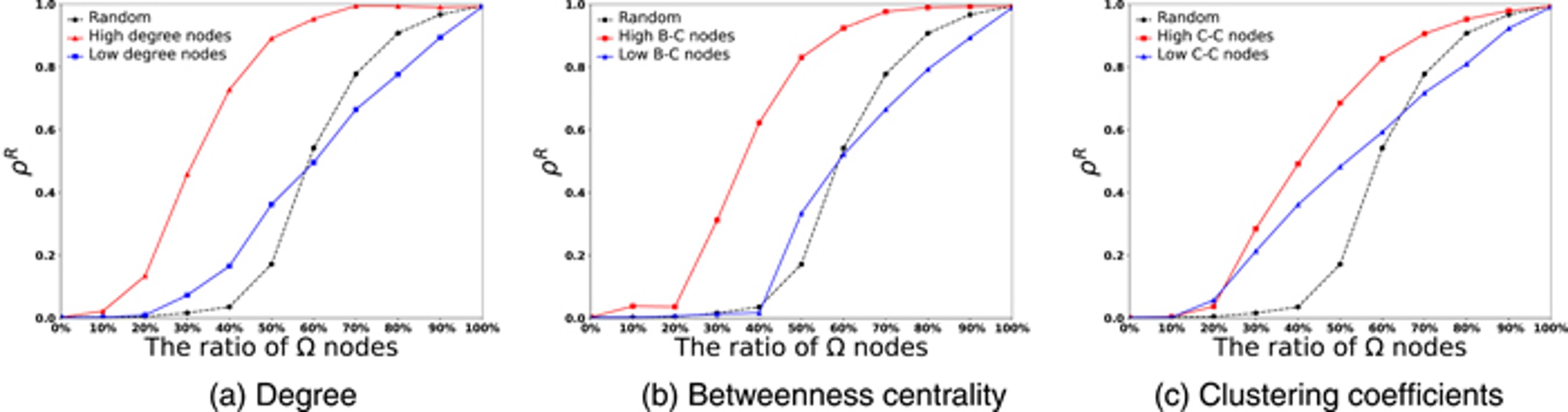}
\caption{\textbf{The final infection density ($\rho^R$) varies with the proportion of $\Omega$ nodes.} Different $\Omega$ nodes are taken to be distinguished by different color curves, the comparison of the same class of attributes is within a subplot, and (a) (b) (c) all contain randomly selected $\Omega$ nodes as a comparison, the experimental parameters are set as follows: $N = 10000$ , $\mu = 0.06$, $\delta = 0.04$, $\gamma = 0.5$, and the experimental results are averaged over 10 simulations. In figure. (b), ``B-C'' represents betweenness centrality, in figure. (c), ``C-C'' represents clustering coefficients. This figure demonstrates that when the $\Omega$ nodes are high centrality nodes, the overall infection size is always higher than others, but when they are low centrality nodes, the infection size shows a linear relationship with the increase of the $\Omega$ node proportion. When the $\Omega$ nodes are randomly selected, the curve shows an S-shape, indicating that low centrality nodes have a lower-order impact on propagation.}
\end{figure*}

Subsequently, we will discuss the effect of different proportions of $\Omega$ nodes on propagation. The initial network is set as $N = 10000$, average degree $\left\langle k \right\rangle = 4$. Information propagation rate is set as $\gamma = 0.3$, infection rate is set as $\beta^U = 0.2, \beta^A = 0.08$. Recovery rate is set as $\mu = 0.06$, and information recovery rate is set as $\delta = 0.04$. We choose different ratios of $\Omega$ nodes to plot the ratio of virus infection density with $\Omega$ nodes as Fig. 10 displayed.

From Fig. 10, it can be seen that in contrast to randomly selected $\Omega$ nodes, the infection density will always be greater when $\Omega$ nodes represent large degree individuals, high betweenness centrality individuals, and high clustering coefficients individuals. When $\Omega$ nodes represent individuals with a small degree, individuals with small betweenness centrality, and individuals with low clustering coefficients, the infection density will present greater if the proportion of $\Omega$ nodes is low, and as the proportion of $\Omega$ nodes increases, the number of infections when $\Omega$ nodes are randomly selected will be greater than the number of infections if they are not randomly selected. Besides, if $\Omega$ nodes are individuals with a small degree, small betweenness centrality, or low clustering coefficients, the growth of the number of infections shows an approximately linear relationship with the growth of the proportion of $\Omega$ nodes. It is speculated that the reason for this phenomenon is that such low centrality nodes have less influence on the spread of the virus on the network, and they are more inclined to be infected as some nodes of high importance are infected when they are infected, and they are usually at the end of the information diffusion chain, i.e., they hardly pass information to the next node, so such nodes show an approximately linear increase in infection density as their proportion increases while they are selected as $\Omega$ nodes.

All three simulations illustrate that nodes with higher centrality in the awareness layer have a more important impact on disease spreading in the physical contact layer. In the simulation of the $\Omega$ nodes ratio, the effect of the low centrality individuals on the spread of disease in the physical contact layer is linear. Moreover, it is not influential as a node with high centrality.

\section{Conclusion and outlook}

In the context of epidemic spreading on an awareness-disease two-layer network, we discuss the effects of the states of individuals with different degree centrality, betweenness centrality, and clustering coefficients in the awareness layer on disease spreading. The analysis of epidemic thresholds is based on MMCA, which quantitatively describes the effects of different factors on disease spreading on a two-layer network. The results of the simulations suggest that individuals with a high degree centrality or high betweenness centrality or large clustering coefficients in the awareness layer have a great influence on the overall disease spreading. If they remain in the unawareness state (U), it decreases the disease spreading threshold, the time it takes for the disease to reach its maximum size, as well as the scale of disease spreading. 

The results of this study enhance the understanding of disease spreading on a two-layer network, especially the role played by different nodes in the awareness layer. Based on our analyses, to curb outbreaks, local authorities should make the existence of an outbreak known at the beginning of the outbreak to all individuals with a large weight in the awareness layer, i.e., some news media departments, some people with large visibility in the network, and broadcasts to people who are not online. These individuals with a large weight in the awareness layer play a significant role in the spread of the epidemic.

However, there are still challenging tasks to address. For propagation on a two-layer network with $\Omega$ nodes, we only experimentally derived the effect of $\Omega$ nodes on disease propagation, which lacks theoretical support. Besides, our model is only discussed under the BA network of the awareness layer and the WS small-world network of the physical contact layer, however, real networks may be more complex. In future work, we will portray the importance of nodes in the awareness layer theoretically. There may be certain nodes in the awareness layer that can broadcast information to all nodes instantaneously, e.g., the government can send information to all people through SMS, how such nodes will affect the spread of disease can be further explored. Moreover, nodes of lower importance in the awareness layer, nodes that are usually at the end of the information dissemination chain, and their impact on the spread of disease are also topics worth discussing.

\section*{Acknowledgment}

Dr. Minyu Feng would like to express his immense gratitude for the supervision under Prof. Jürgen Kurths from 2016 to 2017. He and his research group members delightly dedicate this article to Prof. Kurths, sincerely wish him to have a nice 70th birthday and good health!

We wish to acknowledge the support of Humanities and Social Sciences Fund of Ministry of Education of the People$^{\prime}$s Republic of China with Grant No. 21YJCZH028 and the National Nature Science Foundation of China (NSFC) with Grant No. 62206230.


%merlin.mbs aipnum4-1.bst 2010-07-25 4.21a (PWD, AO, DPC) hacked
%Control: key (0)
%Control: author (8) initials jnrlst
%Control: editor formatted (1) identically to author
%Control: production of article title (0) allowed
%Control: page (1) range
%Control: year (1) truncated
%Control: production of eprint (0) enabled
\begin{thebibliography}{0}%
\makeatletter
\providecommand \@ifxundefined [1]{%
 \@ifx{#1\undefined}
}%
\providecommand \@ifnum [1]{%
 \ifnum #1\expandafter \@firstoftwo
 \else \expandafter \@secondoftwo
 \fi
}%
\providecommand \@ifx [1]{%
 \ifx #1\expandafter \@firstoftwo
 \else \expandafter \@secondoftwo
 \fi
}%
\providecommand \natexlab [1]{#1}%
\providecommand \enquote  [1]{``#1''}%
\providecommand \bibnamefont  [1]{#1}%
\providecommand \bibfnamefont [1]{#1}%
\providecommand \citenamefont [1]{#1}%
\providecommand \href@noop [0]{\@secondoftwo}%
\providecommand \href [0]{\begingroup \@sanitize@url \@href}%
\providecommand \@href[1]{\@@startlink{#1}\@@href}%
\providecommand \@@href[1]{\endgroup#1\@@endlink}%
\providecommand \@sanitize@url [0]{\catcode `\\12\catcode `\$12\catcode `\&12\catcode `\#12\catcode `\^12\catcode `\_12\catcode `\%12\relax}%
\providecommand \@@startlink[1]{}%
\providecommand \@@endlink[0]{}%
\providecommand \url  [0]{\begingroup\@sanitize@url \@url }%
\providecommand \@url [1]{\endgroup\@href {#1}{\urlprefix }}%
\providecommand \urlprefix  [0]{URL }%
\providecommand \Eprint [0]{\href }%
\providecommand \doibase [0]{http://dx.doi.org/}%
\providecommand \selectlanguage [0]{\@gobble}%
\providecommand \bibinfo  [0]{\@secondoftwo}%
\providecommand \bibfield  [0]{\@secondoftwo}%
\providecommand \translation [1]{[#1]}%
\providecommand \BibitemOpen [0]{}%
\providecommand \bibitemStop [0]{}%
\providecommand \bibitemNoStop [0]{.\EOS\space}%
\providecommand \EOS [0]{\spacefactor3000\relax}%
\providecommand \BibitemShut  [1]{\csname bibitem#1\endcsname}%
\let\auto@bib@innerbib\@empty
%</preamble>
\end{thebibliography}%


\begin{thebibliography}{30}
\bibitem{1}
W. O. Kermack and A. G. McKendrick, “A contribution to the mathematical theory of epidemics,” Proceedings of the royal society of london. Series A,Containing papers of a mathematical and physical character, 115, 700–721, (1927).
\bibitem{2}
M. J. Keeling and K. T. Eames, “Networks and epidemic models,” Journal of the royal society interface, 2, 295–307, (2005).
\bibitem{3}
I. Z. Kiss, D. M. Green, and R. R. Kao, “The effect of network mixing patterns on epidemic dynamics and the efficacy of disease contact tracing,” Journal of the Royal Society Interface, 5, 791–799, (2008).
\bibitem{4}
R. Pastor-Satorras, C. Castellano, P. Van Mieghem, and A. Vespignani, “Epidemic processes in complex networks,” Reviews of modern physics, 87, 925, (2015).
\bibitem{5}
R. Pastor-Satorras and A. Vespignani, “Epidemic dynamics and endemic states in complex networks,” Physical Review E, 63, 066117, (2001).
\bibitem{6}
A. Saumell-Mendiola, M. Á. Serrano, and M. Boguná, “Epidemic spreading on interconnected networks,” Physical Review E, 86, 026106, (2012).
\bibitem{7}
C. Stegehuis, R. Van Der Hofstad, and J. S. Van Leeuwaarden, “Epidemic spreading on complex networks with community structures,” Scientific reports, 6, 1–7, (2016).
\bibitem{8}
C.-R. Cai, Z.-X. Wu, M. Z. Chen, P. Holme, and J.-Y. Guan, “Solving the dynamic correlation problem of the susceptible-infected-susceptible model on networks,” Physical review letters, 116, 258301, (2016).
\bibitem{9}
M. Jusup, P. Holme, K. Kanazawa, M. Takayasu, I. Romić, Z. Wang, S. Geček, T. Lipić, B. Podobnik, L. Wang, et al., “Social physics,” Physics Reports, 948, 1–148, (2022).
\bibitem{10}
W. Li, M. Cai, X. Zhong, Y. Liu, T. Lin, and W. Wang, “Coevolution of epidemic and infodemic on higher-order networks,” Chaos, Solitons  Fractals, 168, 113102, (2023).
\bibitem{11}
Y. Nie, W. Li, L. Pan, T. Lin, and W. Wang, “Markovian approach to tackle competing pathogens in simplicial complex,” Applied Mathematics and Computation, 417, 126773, (2022).
\bibitem{12}
M. Salehi, R. Sharma, M. Marzolla, M. Magnani, P. Siyari, and D. Montesi, “Spreading processes in multilayer networks,” IEEE Transactions on Network Science and Engineering, 2, 65–83, (2015).
\bibitem{13}
Y. Li, Z. Zeng, M. Feng, and J. Kurths, “Protection degree and migration in the stochastic sirs model: A queueing system perspective,” IEEE Transactions on Circuits and Systems I: Regular Papers, 69, 771–783, (2021).
\bibitem{14}
Y. Li, B. Pi, and M. Feng, “Limited resource network modeling and its opinion diffusion dynamics,” Chaos: An Interdisciplinary Journal of Nonlinear Science, 32, 043108, (2022).
\bibitem{15}
C. Buono, L. G. Alvarez-Zuzek, P. A. Macri, and L. A. Braunstein, “Epidemics in partially overlapped multiplex networks,” PloS one, 9, e92200, (2014).
\bibitem{16}
G. F. de Arruda, F. A. Rodrigues, and Y. Moreno, “Fundamentals of spreading processes in single and multilayer complex networks,” Physics Reports, 756, 1–59, (2018).
\bibitem{17}
X.-X. Zhan, C. Liu, G.-Q. Sun, and Z.-K. Zhang, “Epidemic dynamics on information-driven adaptive networks,” Chaos, Solitons  Fractals, 108, 196–204, (2018).
\bibitem{18}
X. Hong, Y. Han, G. Tanaka, and B. Wang, “Co-evolution dynamics of epidemic and information under dynamical multi-source information and behavioral responses,” Knowledge-Based Systems, 252, 109413, (2022).
\bibitem{19}
J. Jiang and T. Zhou, “Resource control of epidemic spreading through a multilayer network,” Scientific reports, 8, 1–10, (2018).
\bibitem{20}
Q. Sun, Z. Wang, D. Zhao, C. Xia, and M. Perc, “Diffusion of resources and their impact on epidemic spreading in multilayer networks with simplicial complexes,” Chaos, Solitons  Fractals, 164, 112734, (2022).
\bibitem{21}
H. Guo, Z. Wang, S. Sun, and C. Xia, “Interplay between epidemic spread and information diffusion on two-layered networks with partial mapping,” Physics Letters A, 398, 127282, (2021).
\bibitem{22}
W. Wang, M. Tang, H. Yang, Y. Do, Y.-C. Lai, and G. Lee, “Asymmetrically interacting spreading dynamics on complex layered networks,” Scientific reports, 4, 1–8, (2014).
\bibitem{23}
Z. Ruan, M. Tang, and Z. Liu, “Epidemic spreading with information-driven vaccination,” Physical Review E, 86, 036117, (2012).
\bibitem{24}
J. Fan, Q. Yin, C. Xia, and M. Perc, “Epidemics on multilayer simplicial complexes,” Proceedings of the Royal Society A, 478, 20220059, (2022).
\bibitem{25}
H.-J. Li, W. Xu, S. Song, W.-X. Wang, and M. Perc, “The dynamics of epidemic spreading on signed networks,” Chaos, Solitons  Fractals, 151, 111294, (2021).
\bibitem{26}
K. Kabir, T. Risa, and J. Tanimoto, “Prosocial behavior of wearing a mask during an epidemic: an evolutionary explanation,” Scientific Reports, 11, 1–14, (2021).
\bibitem{27}
G. Shrivastava, P. Kumar, R. P. Ojha, P. K. Srivastava, S. Mohan, and G. Srivastava, “Defensive modeling of fake news through online social networks,” IEEE Transactions on Computational Social Systems, 7, 1159–1167, (2020).




\end{thebibliography}
\end{document}